\newtheorem*{lem*}{Lemma}
\DeclareMathOperator{\spanned}{span}
\begin{document}
\title{Decoherence-free subspaces for a quantum register interacting with a spin environment}
\author{Pawe\l\ Nale\.zyty\footnote{e-mail: 242382@stud.umk.pl }$\ $ and Dariusz Chru\'sci\'nski}
\affil{Institute of Physics, Faculty of Physics, Astronomy and Informatics\\Nicolaus Copernicus University\\Grudziadzka 5, 87-100 Torun, Poland}

\maketitle

\begin{abstract}
We study a model of a quantum spin register interacting with an environment of spin particles in quantum-measurement limit. In the limit of collective decoherence we obtain the form of state vectors that constitute high-dimensional decoherence-free subspaces (DFS). In a more general setting we present sufficient and necessary conditions for existence of low-dimensional DFSs that can be used to construct subspaces of higher dimension.
\end{abstract}

\section{Introduction}

For over three decades the concept of quantum computers has tempted physicists with a promise of a tremendous decrease of computational time needed to solve certain problems in comparison with classical computers. The key element needed for proper work of a quantum computer \cite{jozsa:98} yet the hardest obstacle to overcome is the ability to maintain a coherent superposition of states. As environment-induced decoherence would cause errors in computation one must find effective ways of protection against it. Several methods have already been proposed, including quantum-error correcting codes \cite{knill:96}, dynamical decoupling \cite{viola:99} and encoding logical qubits in decoherence-free subspaces (DFS) \cite{lidar:98,zanardi:99a,zanardi:99b} which can also be used to protect quantum memory.

The theory of decoherence-free subspaces has been an area of intensive research for over 15 years now. Sufficient and necessary conditions for dynamics to support existence of DFSs has been found by Zanardi and Rasetti in \cite{zanardi:99b} for Hamiltonian approach to reduced dynamics and by Lidar et al in \cite{lidar:99,lidar:98} for operator-sum representation formalism and Markovian semigroup approach. Through group-theoretic considerations it has also been realised that symmetry of interaction plays an important role in arising of decoherence-free dynamics and causes the so-called multiple qubit errors \cite{lidar:01,zanardi:00}.

The model to be described in this paper represents a class of canonical models of decoherence known as spin-environment models \cite{schlosshauer:07}. It is a generalization of a central spin decoherence model considered in \cite{cucchietti:05} to a $K$-qubit spin register $\mathcal{R}$ interacting with $N$-spin environment $\mathcal{E}$ (with Hilbert spaces $\mathcal{H_R}$ and $\mathcal{H_E}$ respectively) via interaction Hamiltonian
\begin{equation}	\label{eq:hamiltonian_form}
	H_{int}=\frac{1}{2}\sum\limits_{i=1}^{K}\sigma_{z}^{(i)}\otimes\sum\limits_{j=1}^{N}g_{ij}\sigma_{z}^{(j)},
\end{equation}
where $g_{ij}$ quantify coupling strength of $i$-th register to $j$-th environmental spin and $\sigma_{z}^{(i)}$ acts as standard Pauli $\sigma_z$ operator on $i$-th register/environmental spin and as identity operator on all remaining ones.
It is clear that all properties of the model are encoded into the properties of the {\em interaction matrix} $G=[g_{ij}]$.

Here we will consider only quantum-measurement limit, in which the total Hamiltonian $H=H_{int}$. Spin-environment models are useful for modeling decoherence of physical systems in temperatures close to absolute zero, where the environment acts effectively as a collection of two-level systems rather than a bath of harmonic oscillators (see \cite{schlosshauer:07} and references thereof).

The structure of the paper is the following. In Section \ref{sec:collective_decoherence} we derive conditions for decoherence-free dynamics by explicit calculation of decoherence rates and consider the case of the most symmetric system-environment interaction leading to collective decoherence. In Section \ref{sec:general_case} the main result of general conditions for existence of a special kind of DF subspaces are obtained. The structure of register Hilbert space with emphasis on existing DFSs is then further explored in Section \ref{sec:discussion}.

\section{Decoherence-free dynamics condition and collective decoherence} \label{sec:collective_decoherence}

Due to a simple form of Hamiltonian \eqref{eq:hamiltonian_form} its eigenvectors can be found to be $$|\psi_{kn}\rangle=|k_1\ldots k_K\rangle|n_1\ldots n_N\rangle , $$ where $k_i,\,n_j=0,\,1$. We will denote register (environmental) states as $|k\rangle$ ($|n\rangle$) where $k$ ($n$) is decimal form of a number with binary representation given by a string $k_1\ldots k_K$ ($n_1\ldots n_N$). The corresponding eigenvalue $E_{kn}$ reads
\begin{equation}
	E_{kn}=\frac{1}{2}\sum\limits_{i=1}^{K}(-1)^{k_i}\cdot\sum\limits_{j=1}^{N}(-1)^{n_j}g_{ij}.
\end{equation}
Assume that initially the state of the composite system was separable:
\begin{equation}
	|\psi(0)\rangle=\sum\limits_{k=0}^{2^K-1}\sum\limits_{n=0}^{2^N-1}a_k b_n |k\rangle|n\rangle,
\end{equation}
At time $t$ it evolves into
\begin{equation}	\label{eq:evolved_state}
	|\psi(t)\rangle=\sum\limits_{k=0}^{2^K-1} a_k |k\rangle|\varepsilon_k(t)\rangle,
\end{equation}
where
\begin{equation}
	|\varepsilon_k(t)\rangle=\sum\limits_{n=0}^{2^N-1}e^{-iE_{kn}t} b_n|n\rangle.
\end{equation}

State \eqref{eq:evolved_state} is in general entangled, which at the level of the register leads to the evolution of the elements of the reduced density matrix $\rho(t)$ according to $\rho_{kk'}(t)=\rho_{kk'}(0)r_{kk'}(t)$, where the so-called decoherence rate $r_{kk'}(t)=\langle\varepsilon_{k'}(t)|\varepsilon_k(t)\rangle$ is   given by
\begin{equation}	\label{eq:decoherence_rate}
	r_{kk'}(t)=\sum\limits_{n=0}^{2^N-1}e^{-i(E_{kn}-E_{k'n})t}|b_n|^2.
\end{equation}
From the above equation one can deduce that no decoherence between states $|k\rangle$ and $|k'\rangle$ occurs iff $e^{-i(E_{kn}-E_{k'n})t}$ is $n$-independent, which translates into the following condition
$$   E_{kn}-E_{k'n}=E_0+\frac{2\pi}{t}m,$$
with $m\in\mathbb{Z}$ and $E_0$ being a fixed number. However, as energy levels are time-independent, $m$ must be $0$. Moreover, as $r_{kk}(t)=1$ one has $E_0=0$. The no-decoherence condition thus reads
\begin{equation}	\label{eq:no-decoherence_condition}
	\sum\limits_{i=1}^{K}[ (-1)^{k_i}-(-1)^{k'_i} ]\sum\limits_{j=1}^{N}g_{ij}(-1)^{n_j}=0.
\end{equation}

Let us now first analyze a limiting case of collective decoherence, in which all register spins are coupled to the environment in the same way, that is, the  coupling coefficients $g_{ij}$ are $i$-independent: $g_{ij}=g_j$. Utilizing this fact condition \eqref{eq:no-decoherence_condition} simplifies to
\begin{equation} \label{eq:collective_decoherence}
	\sum\limits_{i=1}^{K}(-1)^{k_i}=\sum\limits_{i=1}^{K}(-1)^{k'_i}.
\end{equation}
Eq. \eqref{eq:collective_decoherence} states that coherence between $|k\rangle$ and $|k'\rangle$ is preserved if binary representation of $k,\, k'$ have equal number of zeros.

\section{General DFSs existence conditions}	\label{sec:general_case}

In the previous section we have obtained the no-decoherence condition in the case of fully symmetrical register-environment interaction under register spin permutations. However, such assumption is not a very realistic one. The question we pose now is how much this symmetry can be perturbed in order to still preserve existence of DFSs. To simplify the discussion let us introduce the following matrices $A$ and $B$  with matrix elements
\begin{equation}\label{}
    A_{ki}=(-1)^{k_i}\ , \ \ \ B_{jn}=(-1)^{n_j}   .
\end{equation}
Moreover, let $S = AGB$ ($G$ is the interaction matrix), that is,
\begin{equation}\label{}
    S_{kn}=\sum\limits_{i=1}^{K}(-1)^{k_i}\cdot\sum\limits_{j=1}^{N}(-1)^{n_j}g_{ij}.
\end{equation}
Using the above definitions condition \eqref{eq:no-decoherence_condition} translates into
\begin{equation}\label{ND}
    S_{kn} - S_{k'n}=0  
\end{equation}
for all $n$.  Now, we determine what constraints must be imposed on $G$ such that the dynamics will not cause decoherence between states $|k\rangle,\, |k'\rangle$. In this paper we consider only the case when $k$ and $k'$ differ by at most two digits in their binary representations. Let $k_{l_1} \neq k'_{l_1}$ or $k_{l_2} \neq k'_{l_2}$. Under such assumptions the no-decoherence condition (\ref{ND}) simplifies to
\begin{equation}	\label{eq:simplified_condition}
  X_{l_1} h_{l_1}^{(n)}  + X_{l_2} h_{l_2}^{(n)}  = 0 ,
\end{equation}
where
\begin{equation}\label{}
    X_{l_1} = (-1)^{k_{l_1}}-(-1)^{k'_{l_1}} \ , \ \ \ X_{l_2} = (-1)^{k_{l_2}}-(-1)^{k'_{l_2}}  ,
\end{equation}
and
\begin{equation}\label{}
    h_i^{(n)}=\sum\limits_{j=1}^{N} (-1)^{n_j}g_{ij} .
\end{equation}
Since $X_{l_\alpha} = 0,\, \pm 2$ the following four cases should be considered:
\begin{enumerate}
	\item \label{it:equal_signs}
		$X_{l_1} = X_{l_2} \neq 0,$ 
	\item	\label{it:different_signs}
		$X_{l_1} = -X_{l_2} \neq 0 ,$ 
	\item
		\begin{enumerate}
			\item	\label{it:zero_nonzero}
           $X_{l_1}=0$ and $X_{l_2} \neq 0 ,$
			\item \label{it:nonzero_zero}
           $X_{l_1} \neq 0$ and $X_{l_2} = 0 . $
		\end{enumerate}
\end{enumerate}
We exclude the trivial case $X_{l_1}=X_{l_2}$ since it implies $k=k'$.

In order to establish the main result we shall use the following simple
\begin{lem*}	\label{lem:random_walk}
Let $x_i \in \mathbb{R}$ and $\sum_i (-1)^{n_i} x_i=0$ for all $n_i \in \{0,1\}$. Then $x_i=0$ for all $i$.
\end{lem*}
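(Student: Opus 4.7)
The plan is to isolate each $x_j$ one at a time by subtracting two instances of the hypothesis that differ in exactly one coordinate of the sign vector $(n_1,\ldots,n_M)$.

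First I would fix an arbitrary index $j$ and consider two specific sign assignments. Let $n^{(0)}$ be the assignment with $n_i=0$ for every $i$, which by hypothesis gives $\sum_i x_i = 0$. Let $n^{(j)}$ be the assignment that agrees with $n^{(0)}$ except that $n_j=1$; the hypothesis then gives $\sum_{i\ne j} x_i - x_j = 0$. Subtracting the second equation from the first eliminates all $x_i$ with $i\ne j$ and yields $2x_j = 0$, hence $x_j = 0$. Since $j$ was arbitrary, every $x_i$ vanishes.

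This is essentially a one-line argument, so there is no serious obstacle; the only thing to be careful about is ensuring that the two chosen sign vectors are both admissible values of $n$ in the hypothesis (they are, since the hypothesis is required to hold for all $n_i \in \{0,1\}$) and that the cancellation really leaves only the single term $\pm 2x_j$. An equivalent viewpoint, which I would mention only if a more conceptual framing were desired, is that the $2^M$ sign vectors span $\mathbb{R}^M$ as linear functionals on the vector $(x_1,\ldots,x_M)$, so the common kernel is trivial; but the pairwise-subtraction argument is the most direct route and is the one I would write out.
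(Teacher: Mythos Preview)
Your argument is correct and is essentially identical to the paper's own proof: both select the all-zero sign vector and the vector with a single flipped entry, then combine the two resulting equations to isolate $2x_j$. The only cosmetic difference is that the paper phrases the combination as adding and subtracting via an auxiliary partial sum, whereas you subtract directly; the content is the same.
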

\begin{proof}
Taking $n_i = 0$ one has $\sum_i x_i = 0$. Now, let only only one (say for $i=l$) $n_i =1$, that is, $n_i=0$ for $i \ne l$. One has $x_l  - {\sum_i}' x_i=0$, where ${\sum_i}' x_i = \sum_i x_i - x_l$. Hence
$$   x_l  + {\sum_i}' x_i=0 , \ \ \  x_l - {\sum_i}' x_i=0, $$
which gives $x_l=0$ and it ends the proof since $l$ is arbitrary.
\end{proof}

We can now  present solutions to all four cases:
\begin{enumerate}
	\item  $X_{l_1} = X_{l_2} \neq 0$ if and only if $k_{l_1} = k_{l_2} \neq k'_{l_1} = k'_{l_2}$.  Condition \eqref{eq:simplified_condition} now simplifies to
\begin{equation}\label{}
  h_{l_1}^{(n)}+h_{l_2}^{(n)} =  \sum_j (-1)^{n_j} [ g_{l_1 j} + g_{l_2 j}] = 0  ,
\end{equation}
and Lemma implies $g_{l_1 j} =- g_{l_2 j}$, that is $l_2$-nd row of $G$ is equal to $l_1$-th row multipied by $-1$.

	\item  $X_{l_1} = X_{l_2} \neq 0$ if and only if $k_{l_1} = k'_{l_2}$ and $k_{l_2} = k'_{l_1}$. Condition \eqref{eq:simplified_condition} now simplifies to
\begin{equation}\label{}
  h_{l_1}^{(n)}-h_{l_2}^{(n)} =  \sum_j (-1)^{n_j} [ g_{l_1 j} - g_{l_2 j}] = 0  ,
\end{equation}
and Lemma implies $g_{l_1 j} =g_{l_2 j}$, that is $l_1$ and $l_2$ rows $G$ coincide.

\item
		Both cases are possible only if $k$ and $k'$ differ by only one digit in binary representations:
		\begin{enumerate}
			\item one has $h_{l_2}^{(n)} =  \sum_j (-1)^{n_j} g_{l_2 j} = 0$ and hence $g_{l_2 j} = 0$,

			\item one has $h_{l_1}^{(n)} =  \sum_j (-1)^{n_j} g_{l_1 j} = 0$ and hence $g_{l_1 j} = 0$.
				
		\end{enumerate}
	We conclude that both \ref{it:zero_nonzero} and \ref{it:nonzero_zero} require vanishing of one row of the interaction matrix $G$.
\end{enumerate}
Note, that introducing a matrix $D_l$ defined by $[D_l]_{ij} = \delta_{il}\delta_{jl}$ one has
\begin{equation}\label{}
    [D_l G]_{ij} = \delta_{li} g_{lj} ,
\end{equation}
and hence the relations between rows of $G$ may be reformulated in terms of $D_l$.
The above results are summarized in the table below.

\newsavebox\equal
\begin{lrbox}{\equal}
	\begin{minipage}{0.3\textwidth}
    \begin{equation*}
			\begin{cases}
				|k\rangle=|\ldots 1 \ldots 1 \ldots\rangle	\\
				|k'\rangle=|\ldots 0 \ldots 0 \ldots\rangle
			\end{cases}
		\end{equation*}
  \end{minipage}
\end{lrbox}

\newsavebox\different
\begin{lrbox}{\different}
	\begin{minipage}{0.3\textwidth}
    \begin{equation*}
			\begin{cases}
				|k\rangle=|\ldots 0 \ldots 1 \ldots\rangle	\\
				|k'\rangle=|\ldots 1 \ldots 0 \ldots\rangle
			\end{cases}
		\end{equation*}
  \end{minipage}
\end{lrbox}

\newsavebox\pierwszy
\begin{lrbox}{\pierwszy}
	\begin{minipage}{0.3\textwidth}
    \begin{align*}
			\begin{cases}
				|k\rangle=|\ldots 0 \ldots 0 \ldots\rangle	\\
				|k'\rangle=|\ldots 0 \ldots 1 \ldots\rangle
			\end{cases}		\\
			\begin{cases}
				|k\rangle=|\ldots 1 \ldots 1 \ldots\rangle	\\
				|k'\rangle=|\ldots 1 \ldots 0 \ldots\rangle
			\end{cases}
		\end{align*}
  \end{minipage}
\end{lrbox}

\newsavebox\drugi
\begin{lrbox}{\drugi}
	\begin{minipage}{0.3\textwidth}
    \begin{align*}
			\begin{cases}
				|k\rangle=|\ldots 0 \ldots 0 \ldots\rangle	\\
				|k'\rangle=|\ldots 1 \ldots 0 \ldots\rangle
			\end{cases}	\\
			\begin{cases}
				|k\rangle=|\ldots 0 \ldots 1 \ldots\rangle	\\
				|k'\rangle=|\ldots 1 \ldots 1 \ldots\rangle
			\end{cases}
		\end{align*}
  \end{minipage}
\end{lrbox}

\begin{table}	
	\centering
	\begin{tabular}{  c  c  p{3,5cm}  p{4cm} }
		\toprule
			Case & Form of state vectors & No--decoherence condition & Symmetry of $G$\\
		\midrule
			\ref{it:equal_signs} &	\usebox{\equal} & $h_{l_1}^{(n)} = -h_{l_2}^{(n)}$ & $[D_{l_1} + D_{l_2}]G = 0$  \\
			\ref{it:different_signs} &	\usebox{\different} & $h_{l_1}^{(n)} = h_{l_2}^{(n)}$ & $[D_{l_1} - D_{l_2}]G = 0$ \\
			\ref{it:zero_nonzero} &	\usebox{\pierwszy} & $h_{l_2}^{(n)}=0$ & $D_{l_2}G = 0$ \\
			\ref{it:nonzero_zero} &	\usebox{\drugi} & $h_{l_1}^{(n)}=0$ & $D_{l_1}G = 0$ \\
		\bottomrule
	\end{tabular}
\end{table}

\section{Discussion}	\label{sec:discussion}

We have shown that conditions for decoherence-free dynamics within a model defined by (\ref{eq:hamiltonian_form}) can be described by the symmetries of the interaction matrix $G$.

Let us now explore the structure of register Hilbert space if the structure of $G$ allows existence of DFSs. Let us first observe that within this model any decoherence-free subspace has its conjugate one. To express it formally, assume that there exists a DFS $\mathcal{D}=\spanned\{ |k\rangle \}_{k\in\Delta}$, where $\Delta$ is a set of indices. It turn out that a subspace $\mathcal{D'}=\spanned\{ |2^K-k-1\rangle \}_{k\in\Delta}$ provides another DFS. To prove this statement recall that binary form of $2^K-k-1$ can be obtained from that of $k$ by binary complement operation in which all zeros change to ones and vice versa. Combining this fact with definition of the matrix $S$ yields that $S_{2^K-k-1,n}=-S_{kn}$, which immediately implies the result.

In collective decoherence case we have found that coherence is preserved between vectors $|k\rangle,\, |k'\rangle$ if $k$ and $k'$ have the same number of zeros (denote it by $l$) in their binary representations.   It is  easy to see that a subspace $\mathcal{H}_{\mathcal{R},l}\subset\mathcal{H_R}$ spanned by all vectors of the same $l$ is a DFS of dimension $\dim\mathcal{H}_{\mathcal{R},l}=\binom{K}{l}$. Utilizing Stirling's approximation it can be proved that the dimension of the biggest DFS grows as $\frac{2^K}{\sqrt{K}}$ with the size of the register. Moreover, as $\sum\limits_{l=0}^{K}\binom{K}{l}=2^K$ the whole register Hilbert space decomposes into a direct sum of decoherence-free subspaces:
\begin{equation}
	\mathcal{H_R}=\bigoplus\limits_{l=0}^{K}\mathcal{H}_{\mathcal{R},l}.
\end{equation}
The more general setting considered in Section \ref{sec:general_case} is less trivial. In each case one pair of vectors constituting a two dimensional DFS can be chosen in $2^{K-2}$ ways as only two of $K$ digits in binary form of $k,\, k'$ are fixed. This implies that in both cases \ref{it:equal_signs} and \ref{it:different_signs} there are $2^{K-2}$, while in cases \ref{it:zero_nonzero}, \ref{it:nonzero_zero} there are in total $2^{K-1}$ DFSs. This yields that only in the last two cases $\mathcal{H_R}$ decomposes again into a direct sum of decoherence-free subspaces.

\section*{Acknowledgements}
This work was partially supported by the National Science Center project
DEC-2011/03/B/ST2/ 00136.

\end{document}